\newcommand{\removelatexerror}{\let\@latex@error\@gobble}
\newtheorem{theorem}{Theorem}
\begin{document}
\title{A Fast Blind Impulse Detector for Bernoulli-Gaussian Noise in Underspread Channel}

\author{\IEEEauthorblockN{Bin~Han and Hans~D.~Schotten}
	\IEEEauthorblockA{Technische Universit\"at Kaiserslautern\\
		Institute of Wireless Communication\\
		Paul-Ehrlich-Stra\ss e 11, 67663 Kaiserslautern, Germany\\
		Email: \{binhan, schotten\}@eit.uni-kl.de}
}

\maketitle

\begin{abstract}
Impulsive noises widely existing in various channels can significantly degrade the performance and reliability of communication systems. The Bernoulli-Gaussian (BG) model is practical to characterize noises in this category. To estimate the BG model parameters from noise measurements, a precise impulse detection is essential. In this paper, we propose a novel blind impulse detector, which is proven to be fast and accurate for BG noise in underspread communication channels.
\end{abstract}

\IEEEpeerreviewmaketitle

\section{Introduction}

Impulsive noises are widely observed in various communication systems, including ultra wide-band (UWB) systems \cite{sharma2016sparsity}, wireless local area networks (WLAN) \cite{sanchez2007interference}, digital subscriber line (DSL) networks \cite{henkel1994wideband} and power line communication (PLC) systems \cite{han2017noise}. Due to their non-stationary nature and high peak power, they can significantly degrade the performance and reliability of communication systems. Such impacts can become critical in urban and industrial environments where \begin{enumerate*}
	\item frequent mechanical switching operations and vibrations are present to produce dense impulsive noises, and
	\item ultra-high reliability and ultra-low latency are expected in short-range wired/wireless communication applications.
\end{enumerate*} 
Various use cases of this kind have been addressed in scopes of both the Fifth Generation (5G) mobile networks \cite{osseiran2014scenarios} and new advanced industrial communication solutions \cite{bockelmann2017hiflecs}. In these contexts, techniques of modeling impulsive noises are required as a major tool to define channel models and thereby develop robust communication systems.

The Bernoulli-Gaussian (BG) model has been widely applied on impulsive noises. Compared to other common impulsive noise models such like the Middleton's Class-A (MCA) model \cite{middleton1979procedures} and the symmetric $\alpha$-stable (S$\alpha$S) model \cite{nikias1995signal}, the BG model stands out with its compatibility to different noise bandwidths, while simultaneously exhibiting a heavy-tailed probability density function (PDF) with a simple closed-form expression. Moreover, it can be easily extended to the Markov-Gaussian model to characterize noise bursts \cite{shongwe2015study}.

However, despite these superiorities, the deployment of BG model in communications and signal processing has been limited by the insufficient study on its parameter estimation, or more specifically, on the impulse detection. Unlike the MCA/S$\alpha$S models that describe the overall statistics of the mixed noise, the BG model separates impulses from the background noise, so that its parameter estimation relies on accuracy impulse detection. Unfortunately, most existing methods of blind impulse detection for BG processes, if not all, either suffer from high order of computational complexity, or highly rely on the initial guess to avoid convergence at local extremes that may bias far away from the ground truth. 

In this paper, focusing on the particular but common case of underspread channels, we propose a novel blind BG impulse detector, which is fast, accurate and reliable in wide ranges of impulse rate and impulse-to-background power ratio. The rest part of this paper is organized as follows: First,  in Sec.~\ref{sec:model} we setup the BG model for impulsive noises, discuss about its approximation in underspread channels, and analyze the detection model. Then in Sec.~\ref{sec:review} we briefly review the existing methods of BG impulse detection. Subsequently, in Sec. \ref{sec:methods} we introduce our proposed approach, including a novel iterative algorithm, a robust Gaussian estimator and a sparsity-sensitive initializing method. To the end, some numerical simulation results are presented in Sec.\ref{sec:simulation}, before we conclude our work and provide some outlooks in Sec.~\ref{sec:conclusion}.

\section{System Model}\label{sec:model}
\IEEEpubidadjcol 
\subsection{Impulsive Noises as Bernoulli-Gaussian Processes}
A BG process $X(\rho,\sigma_1^2,\sigma_2^2)$ switches randomly between two independent Gaussian states, the switching behavior is determined by an independent Bernoulli process $\Phi$:
\begin{equation}
	\begin{cases}
		\phi_n\sim B(1,\rho),\\
		(x_n|\phi_n)\sim\mathcal{N}\left(0,\sigma_1^2+\phi_n\sigma_2^2\right).
		\label{equ:bg_model}
	\end{cases}
\end{equation}
In the context of impulsive noise, $\rho\in[0,1]$ is the \textit{impulse rate}, $\sigma_1^2$ is the \textit{background noise power}, $\sigma_2^2>\sigma_1^2$ is the \textit{impulsive noise power}, and $n\in\mathbb{N}$ is the index of samples. Note that every observation $x_n$ is {independently and identically distributed (i.i.d.)} with respect to the PDF
\begin{align}
	\begin{split}
		f_X(x)&=p_\Phi(0)f_X(x|\Phi=0)+p_\Phi(1)f_X(x|\Phi=1)\\
		&=\frac{\rho}{\sqrt{2\pi\sigma_1^2}}e^{-\frac{x^2}{2\sigma_1^2}}+\frac{1-\rho}{\sqrt{2\pi\left(\sigma_1^2+\sigma_2^2\right)}}e^{-\frac{x^2}{2\left(\sigma_1^2+\sigma_2^2\right)}}.
	\end{split}
	\label{equ:bg_pdf}
\end{align}

In most literatures such as~\cite{kormylo1982maximum,lavielle1993bayesian,champagnat1996unsupervised,soussen2011bernoulli,mendel2013optimal}, the observation distortion is considered, so that a linear time-invariant (LTI) system $H$ is introduced to filter the BG sequence $\mathbf\{x_0,x_1,\dots x_{N-1}\}$, and an extra error $\epsilon$ is added at the output, to generate the final observation sequence:
\begin{equation}
	y_n=\sum_{m=0}^{L}x_nh_{n-m}+\epsilon_n,
\end{equation}
where $h$ is the impulse response of $H$ with $L$ as its length, and $\epsilon\sim\mathcal{N}(0,\sigma_\epsilon^2)$ is Gaussian distributed. Usually, $h$ and $\sigma_\epsilon^2$ are considered as known so that $p(y_n|\phi_n)$ can be calculated with $p(x_n|\phi_n)$, $h$ and $\sigma_\epsilon^2$ straight-forward.

\subsection{Approximations in Underspread Channels}
Many channels with impulsive noises are reported to be generally underspread, such as UWB channels~\cite{raghavan2007capacity} and PLC channels~\cite{canete2008time}. In this case, $L$ is negligible with respect to the symbol interval of communication systems, so we can approximately consider that $h_n=\delta(n)$, and therefore $y_n=x_n+\epsilon_n$. As $\epsilon$ is a Gaussian noise independent from $X$, we have
\begin{equation}
	(y_n|\phi_n)\sim\mathbb{N}(0,\sigma_1^2+\sigma_\epsilon^2+\phi_n\sigma_2^2).
\end{equation}
This differs from Eq.~(\ref{equ:bg_model}) only with a known constant offset on the background noise power. Hence, it is convenient not to distinguish the observation error from background noise, but to simply assume that an uncontaminated observation sequence of $X$ is available, as usually done in the field of noise characterization e.g. \cite{zimmermann2000analysis,shongwe2015study,han2017noise}.

\subsection{Bayesian Estimation and Impulse Detection}
Taking the underspread assumption, the problem of BG parameter estimation can be represented as: given a finite sequence of observation $\mathbf{x}=\{x_0,x_1,\dots,x_{N-1}\}$, to estimate the most probable parameter vector $\theta=\left(\rho,\sigma_1^2,\sigma_2^2\right)$:
\begin{equation}
		\hat{\theta}=\arg\max_{\theta\in\Theta}f\left(\theta|\mathbf{x}\right).
\end{equation}
where $\Theta=[0,1]\times\mathbb{R}^+\times\mathbb{R}^+$ is the space of $\theta$. As the a posteriori PDF $f\left(\theta|x_0,x_1,\dots,x_{N-1}\right)$ is impractical to obtain, we naturally rely on the Bayesian method:
\begin{equation}
		f(\theta|\mathbf{x})=\frac{f(\mathbf{x}|\theta)f(\theta)}{f(\mathbf{x})}.
\end{equation}
Recalling that every observation $x_n$ is \textit{i.i.d.}, we have
\begin{align}
	\begin{split}
		\hat{\theta}&=\arg\max_{\theta\in\Theta}\frac{f(\mathbf{x}|\theta)f(\theta)}{f(\mathbf{x})}\\&	=\arg\max\limits_{\theta\in\Theta}f(\mathbf{x}|\theta)f(\theta)\\
		&=\arg\max\limits_{\theta\in\Theta}f(\theta)\prod_{n=0}^{N-1}f_X(x_n|\theta).
	\end{split}
	\label{sec:bayesian_theta}
\end{align}
where $f_X(x_n|\theta)$ can be calculated as Eq.~(\ref{equ:bg_pdf}). However, the a priori PDF $f(\theta)$ is still hard to directly obtain from $\mathbf{x}$ or $\hat{\theta}$.

Noticing that both Bernoulli and Gaussian processes are stationary (although BG processes are non-stationary), $\theta$ can be consistently estimated from the empirical statistics, if only the ground truth of Bernoulli sequence $\underline{\phi}=\{\phi_0,\phi_1,\dots,\phi_{N-1}\}$ is known:
\begin{align}
	\begin{cases}
		\lim\limits_{N\to\infty}\frac{1}{N}\sum\limits_{n=0}^{N-1}\phi_n=\rho;\\
		\lim\limits_{N\to\infty}\textrm{Var}\{x_n|\phi_n=0\}=\sigma_1^2;\\
		\lim\limits_{N\to\infty}\textrm{Var}\{x_n|\phi_n=1\}=\sigma_1^2+\sigma_2^2.
	\end{cases}
	\label{equ:ergodic}
\end{align}
More importantly, unlike $f(\theta)$, the a priori probability mass function of $\underline{\phi}$ can be simply written as
\begin{equation}
	p\left(\underline{\phi}\right)=\prod\limits_{n=0}^{N-1}\rho^{\phi_n}(1-\rho)^{1-\phi_n}.
	\label{equ:pdf_phi_sequence}
\end{equation}
This encourages to estimate $\underline{\phi}$ instead of directly estimating $\theta$. Similar to Eq.~(\ref{sec:bayesian_theta}) we have
\begin{align}
	\begin{split}
		\hat{\underline{\phi}}&=\arg\max_{\underline{\phi}\in\Omega}p\left(\underline{\phi}|\mathbf{x}\right)\\
		&=\arg\max_{\underline{\phi}\in\Omega}p\left(\underline{\phi}\right)\prod_{n=0}^{N-1}f_X\left(x_n|\underline{\phi}\right)\\
		&=\arg\max_{\underline{\phi}\in\Omega}p\left(\underline{\phi}\right)\prod_{n=0}^{N-1}f_X\left(x_n|\phi_n\right)
	\end{split}
	\label{equ:bayesian_phi_sequence}
\end{align}
where $\Omega=\{0,1\}^N$. When $N$ is large enough, we can estimate $\rho$, $\sigma_1^2$ and $\sigma_2^2$ with Eq.~(\ref{equ:bg_pdf}), then calculate $p\left(\underline{\phi}\right)$ with Eq.~(\ref{equ:pdf_phi_sequence}), and $f_X\left(x_n|\underline{\phi}\right)$  as
\begin{equation}
	f_X\left(x_n|\phi_n\right)=\frac{1}{\sqrt{2\pi(\sigma_1^2+\phi_n\sigma_2^2)}}e^{-\frac{x_n^2}{2\left(\sigma_1^2+\phi_n\sigma_2^2\right)}}
\end{equation}

Thus, the problem of estimating $\theta$ from a continuous space $\Theta$ is converted to an impulse-detection problem, where an optimum should be selected from $2^N$ different binary sequences.

\section{Existing BG Impulse Detection Methods}\label{sec:review}
When the impulse rate $\rho$ is low and the impulse-to-background power ratio ${\sigma_2^2}/{\sigma_1^2}$ is high, the detection of high-powered impulses from BG noise can be easily accomplished through rejecting outliers with robust statistics and simple thresholding, e.g. the approach reported in \cite{han2013online}. However, impulsive noises do not always fulfill both the premises simultaneously. Aiming at a universal solution for a general $\Theta$, the Bayesian approach in Eq.~(\ref{equ:bayesian_phi_sequence}) is preferred. 

Due to the absence of gradient information, Eq.~(\ref{equ:bayesian_phi_sequence}) cannot be analytically solved. On the other hand, a full-search for the optimal $\hat{\underline{\theta}}$ in the space $\{0,1\}^N$ is impractical due to its huge time complexity of exponential order. Therefore, heuristic optimization algorithms appear attractive for this problem. The most classical method of this kind for BG processes is provided by \textit{Kormylo} and \textit{Mendel}, who proposed their famous \textit{single most likely replacement (SMLR)} algorithm in the early-1980s \cite{kormylo1982maximum}, which begins with an arbitrary initial guess of $\hat{\underline{\phi}}$ and iteratively update it. In each loop, it updates one and only one sample in $\hat{\underline{\phi}}$, which maximizes the updated likelihood function. The algorithm keeps iterating in loops to its convergence, i.e. until no single-sample update of $\hat{\underline{\phi}}$ can further raise the resulted likelihood function. The SMLR is proven to be highly practical due to its simple and efficient iterative implementation, which decreases the time complexity from exponential $\mathcal{O}(2^N)$ to polynomial $\mathcal{O}(N^3)$. 

However, the weakness of SMLR is also conspicuous, that it can easily end up with a local convergence instead of global optimum. Therefore, its performance relies so highly on the initial guess, that it can hardly be deployed as a blind detector alone, but only under supervision of another initializing estimator. Besides, the SMLR was designed for BG sequence deconvolution, where $\theta$ is known so that $f(\underline{\phi})$ can be directly calculated without applying Eq.~(\ref{equ:ergodic}). Under this condition, noise measurements can be broken into small subsequences, each with $N$ up to several thousands, for which the SMLR's time complexity of $\mathcal{O}(N^3)$ is reasonably satisfying. In contrast, for our goal of blind BG model estimation under discussion here, $\theta$ is unknown but must be online estimated and updated with respect to $\hat{\underline{\phi}}$ in every iteration. To guarantee the validity of ergodic estimation in Eq.~(\ref{equ:ergodic}), especially in the cases with very small values of $\rho$ (for instance, PLC noises with weak disturbance are reported to have impulse rate down to $1.35\times10^{-5}$ \cite{zimmermann2000analysis}), a huge observation length $N$ becomes essential, and the SMLR with cubic time complexity appears computationally expensive. Therefore, the SMLR is inappropriate for blind BG model estimation of highly sparse impulsive noises.

In the mid-1990s, \textit{Champagnat} et al. enhanced SMLR with further improved numerical efficiency and memory requirement \cite{champagnat1996unsupervised}. Recently in the context of sparse signal restoration, \textit{Soussen} et al. have also adopted SMLR as the so-called single best replacement (SBR), which is reported to be fast and stable\cite{soussen2011bernoulli}. Nevertheless, both the variants remain with the same order of time complexity, and do not overcome the problem of local convergence.

Apart from the SMLR algorithm, \textit{Lavielle} has shown that some classical Bayesian methods such as the maximum a posteriori (MAP), the marginal probability mode (MPM) and the iterative conditional mode (ICM) also give good performances in BG deconvolution \cite{lavielle1993bayesian}. Especially, the ICM is proven to be much faster than the others with its quadric time complexity, which enables its deployment on huge datasets. However, similar to the SMLR, the ICM also usually converges to local minimums, leading to an accuracy severely depending on the initial guess of impulse locations.

\section{Methods}\label{sec:methods}
Aiming at a fast impulse detection, we designed a novel approach, which we introduce in this section as follows.

\subsection{Iterative Threshold Shifting}
We start with a simpler problem where $\sigma_1^2$ and $\sigma_2^2$ are known. In this situation, the maximum likelihood impulse detector always has a thresholding behavior:
\begin{theorem}{}
	Given an observation segment $\mathbf{x}$ of underspread BG noise, where every sample $x_n$ is i.i.d. according to Eq.~(\ref{equ:bg_pdf}) and the Gaussian parameters $\left(\sigma_1^2, \sigma_2^2\right)$ are known, the output of the MLE defined by Eq.~(\ref{equ:bayesian_phi_sequence}) always fulfill
	\begin{equation}
		|x_m|\ge|x_n|\iff\hat{\phi}_m\ge\hat{\phi}_n
		\label{equ:threshold_theorem}
	\end{equation}
\end{theorem}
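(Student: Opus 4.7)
The strategy is to exploit the fact that both the Bernoulli prior of Eq.~(\ref{equ:pdf_phi_sequence}) and the conditional likelihood $\prod_n f_X(x_n\mid\phi_n)$ factorize sample-wise, so that the joint MLE in Eq.~(\ref{equ:bayesian_phi_sequence}) collapses to $N$ independent binary decisions; each of these then reduces to a pure magnitude test, from which Eq.~(\ref{equ:threshold_theorem}) is immediate.

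First I would take the logarithm of the MLE objective of Eq.~(\ref{equ:bayesian_phi_sequence}) and rewrite it as $\sum_{n=0}^{N-1} g_{x_n}(\phi_n)$, where $g_x(\phi) = \phi\log\rho + (1-\phi)\log(1-\rho) + \log f_X(x\mid\phi)$. Because the summands are fully decoupled, maximizing over $\{0,1\}^N$ is the same as maximizing each summand independently, so $\hat\phi_n = \arg\max_{\phi\in\{0,1\}} g_{x_n}(\phi)$ for every $n$. The joint problem thus reduces to characterizing a single-sample decision rule.

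Next I would compute the per-sample log-likelihood ratio,
\begin{equation*}
\Lambda(x) \;=\; g_x(1) - g_x(0) \;=\; \log\frac{\rho}{1-\rho} + \frac{1}{2}\log\frac{\sigma_1^2}{\sigma_1^2+\sigma_2^2} + \frac{\sigma_2^2}{2\sigma_1^2\,(\sigma_1^2+\sigma_2^2)}\,x^2.
\end{equation*}
Since $\sigma_2^2>0$, the coefficient of $x^2$ is strictly positive, so $\Lambda$ is even and strictly increasing in $|x|$. Consequently $\hat\phi_n = 1$ exactly when $|x_n|$ exceeds a single constant threshold $T$ determined by $\rho$, $\sigma_1^2$ and $\sigma_2^2$, with boundary ties resolved by any fixed convention.

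The monotonicity then follows at once: $|x_m|\ge|x_n|$ implies $\Lambda(x_m)\ge\Lambda(x_n)$, which rules out the only order-violating configuration $\hat\phi_m=0, \hat\phi_n=1$ and yields $\hat\phi_m\ge\hat\phi_n$; the reverse implication in Eq.~(\ref{equ:threshold_theorem}) is its contrapositive. There is no real obstacle here: the only analytic step is the one-line simplification of $\Lambda(x)$, and the sole conceptual point worth flagging is that the sample-wise factorization of $p(\underline{\phi})$ --- which holds \emph{despite} $\rho$ coupling every Bernoulli factor --- is what allows the joint MLE over a combinatorial space to be solved coordinate-wise.
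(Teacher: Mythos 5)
Your route is genuinely different from the paper's. The paper proves the threshold property by an exchange argument: it compares two candidate sequences $\underline{\alpha}$ and $\underline{\beta}$ that differ only by swapping one $(0,1)$ assignment pair into $(1,0)$, observes that the prior $p\left(\underline{\phi}\right)$ and all untouched likelihood factors cancel in the ratio $p\left(\underline{\alpha}|\mathbf{x}\right)/p\left(\underline{\beta}|\mathbf{x}\right)$ (Eq.~(\ref{equ:likelihood_ratio})), and concludes that at the optimum no magnitude-violating pair can survive. You instead decouple the log-objective into $N$ independent per-sample decisions and solve each by a likelihood-ratio test; your algebra for $\Lambda(x)$ is correct and, where it applies, yields a strictly stronger conclusion (an explicit closed-form threshold rather than mere monotonicity).

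The gap is in the decoupling step, and it is exactly the point you flagged and then waved through. The theorem assumes only that the Gaussian parameters are known; $\rho$ is not. In the paper's setting (see the discussion following Eq.~(\ref{equ:bayesian_phi_sequence}), the ergodic estimates of Eq.~(\ref{equ:ergodic}), and the algorithm of Fig.~\ref{fig:its}) $\rho$ is re-estimated from the candidate $\hat{\underline{\phi}}$ itself, i.e.\ $\rho=\frac{1}{N}\sum_n\phi_n$. Then $p\left(\underline{\phi}\right)$ is no longer a product of functions of the individual coordinates --- every factor depends on all of $\underline{\phi}$ through $\rho$ --- and the term $\log\frac{\rho}{1-\rho}$ in your $\Lambda$ moves with the candidate. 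Coordinate-wise maximization is therefore not equivalent to joint maximization, and $\Lambda(x)=0$ does not identify the MLE threshold. The paper's swap argument is immune to this precisely because a $(0,1)\to(1,0)$ swap preserves $\sum_n\phi_n$, hence preserves the estimated $\rho$ and the entire prior factor, so the posterior ratio still collapses to a pure magnitude comparison. To repair your proof you would either have to add the hypothesis that $\rho$ is known and fixed (strictly stronger than what the theorem states and than what the ITS algorithm needs), or restrict your comparison to candidates with the same number of impulses --- which is essentially the paper's exchange argument.
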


\begin{proof}
	Consider two different estimations of $\underline{\phi}$, namely $\underline{\alpha}$ and $\underline{\beta}$, respectively, which differ from each other at only two samples:
	\begin{equation}
		\begin{cases}
			\alpha_m=0,\quad\alpha_n=1;\\
			\beta_m=1,\quad\beta_n=0;\\
			\alpha_i=\beta_i\quad\forall i\notin\{m,n\}.
		\end{cases}
	\end{equation}
	Comparing their a posteriori probability densities we have:
	\begin{align}
	\begin{split}
		&\frac{p\left(\underline{\alpha}|\mathbf{x}\right)}{p\left(\underline{\beta}|\mathbf{x}\right)}\\&=\frac{f\left(\mathbf{x}|\underline{\alpha}\right)p\left(\underline{\alpha}\right)}{f\left(\mathbf{x}|\underline{\beta}\right)p\left(\underline{\beta}\right)}\\
	    =&\frac{f_X\left(x_m|\hat{\phi}_m=0\right)f_X\left(x_n|\hat{\phi}_n=1\right)}{f_X\left(x_m|\hat{\phi}_m=1\right)f_X\left(x_n|\hat{\phi}_n=0\right)}\\
	    =&\frac{e^{-\frac{x_m^2}{2\sigma_1^2}}e^{-\frac{x_n^2}{2\left(\sigma_1^2+\sigma_2^2\right)}}}{e^{-\frac{x_n^2}{2\sigma_1^2}}e^{-\frac{x_m^2}{2\left(\sigma_1^2+\sigma_2^2\right)}}}
	    =e^{\frac{1}{2}\left(x_n^2-x_m^2\right)\left(\frac{1}{\sigma_1^2}-\frac{1}{\sigma1^2+\sigma_2^2}\right)}
	\end{split}
	\label{equ:likelihood_ratio}
	\end{align}	
	As $p\left(\underline{\alpha}|\mathbf{x}\right)\ge 0$ and $p\left(\underline{\beta}|\mathbf{x}\right)\ge 0$, we know that
	\begin{equation}
		\begin{cases}
			p\left(\underline{\alpha}|\mathbf{x}\right) > p\left(\underline{\beta}|\mathbf{x}\right) &x_m^2<x_n^2;\\
			p\left(\underline{\alpha}|\mathbf{x}\right) = p\left(\underline{\beta}|\mathbf{x}\right) \quad \textrm{if} &x_m^2=x_n^2;\\
			p\left(\underline{\alpha}|\mathbf{x}\right) < p\left(\underline{\beta}|\mathbf{x}\right) & x_m^2>x_n^2.
		\end{cases}
		\label{equ:replacements_comparision}
	\end{equation}
	Given an arbitrary estimation $\hat{\underline{\phi}}$, an increase in the a posteriori probability density $p\left(\mathbf{x}|\hat{\underline{\phi}}\right)$ can be achieved by switching the values of an arbitrary assignment pair $\left(\hat{\phi}_m,\hat{\phi}_n\right)=(0,1)$ that $|x_m|>|x_n|$ to $(1,0)$. Keeping iteratively doing this until no such pair is available, Eq.~(\ref{equ:threshold_theorem}) is ensured to be valid.
\end{proof}

Thus, the MLE of $\underline{\phi}$ is converted into an optimal selection of a threshold $T$ that maximizes $p\left(\hat{\underline{\phi}}|\mathbf{x}\right)$ with
\begin{equation}
	\hat{\phi}_n=\begin{cases}
		1&|x_n|\ge T;\\
		0&\textrm{otherwise}.
	\end{cases}
\end{equation}

Calling back the mechanism of SMLR, where in each loop all $N$ different candidate updates of $\hat{\underline{\phi}}$ must be evaluated: now we know that only the candidates in a ''thresholded'' style of Eq.~(\ref{equ:threshold_theorem}) are meaningful. Based on this, we designed our \textit{iterative threshold shifting} (ITS) method, which is equivalent to the SMLR algorithm under the discussed conditions, as described in Fig.~\ref{fig:its}. Differing from the original LSMR, which has to compute the cost function for $N$ single replacement candidates in every loop, our ITS algorithm considers only two candidates in every loop, so that the time complexity is reduced from $\mathcal{O}(N^3)$ to $\mathcal{O}(N^2)$.

\begin{figure}[!h]
	\removelatexerror
	\begin{algorithm}[H]
		Given $\left(\sigma_1^2,\sigma_2^2\right)$, start with an initial threshold $T_0$\;
		\For(\qquad\emph{Main loop}){$i = 0$ to $N-1$}
		{
			$\hat{\phi}_n\gets\begin{cases}
				1 &|x_n|\ge T_i\\
				0 &\textrm{otherwise}
			\end{cases}$\;
			Estimate $\rho$ with $\hat{\underline{\phi}}$ and $\mathbf{x}$\;
			$P\gets f\left(\mathbf{x}|\hat{\underline{\phi}}\right)p\left(\hat{\underline{\phi}}\right)$\;
			Select $l:\quad x_l^2\le x_n^2, \quad \forall \hat{\phi_n}=1$\;
			Generate $\underline{\zeta}: \zeta_n=\begin{cases}
				0&n=l\\
				\hat{\phi}_n&\textrm{otherwise}
			\end{cases}$\;
			Estimate $\rho_l$ with $\underline{\zeta}$ and $\mathbf{x}$\;
			$P_l\gets f\left(\mathbf{x}|\underline{\zeta}\right)p\left(\underline{\zeta}\right)$\;
			Select $u:\quad x_u^2\ge x_n^2, \quad \forall \hat{\phi_n}=0$\;
			Generate $\underline{\eta}: \eta_n=\begin{cases}
			1&n=u\\
			\hat{\phi}_n&\textrm{otherwise}
			\end{cases}$\;
			Estimate $\rho_u$ with $\underline{\eta}$ and $\mathbf{x}$\;
			$P_u \gets f\left(\mathbf{x}|\underline{\eta}\right)p\left(\underline{\eta}\right)$\;
	       \Switch(\qquad\emph{Threshold shifting}){$\max\{P,P_l,P_u\}$}{
				\Case{$P_l$}{
					$T_{i+1}\gets\min(x_n|\zeta_n=1)$
				}
				\Case{$P_u$}{
					$T_{i+1}\gets\min(x_n|\eta_n=1)$
				}
				\Other{\Return$\hat{\underline{\phi}}$}
			}
		}
	\end{algorithm}
	\caption{The ITS algorithm with known Gaussian parameters}
	\label{fig:its}
\end{figure}

\subsection{Robust Gaussian Estimation}
When the Gaussian parameters $\sigma_1^2$ and $\sigma_2^2$ are unknown, they have to be online estimated. So the ITS algorithm must be modified as Fig.~\ref{fig:its_with_gaussian_estimation} shows. Nevertheless, as indicated by Eq.~(\ref{equ:ergodic}), the estimations are dependent on $\underline{\phi}$, so that Eq.~(\ref{equ:likelihood_ratio}) is rewritten as:
\begin{align}
	\begin{split}
		\frac{p\left(\underline{\alpha}|\mathbf{x}\right)}{p\left(\underline{\beta}|\mathbf{x}\right)}
		&=\frac{e^{-\frac{x_m^2}{2\sigma_{1,\alpha}^2}}e^{-\frac{x_n^2}{2\left(\sigma_{1,\alpha}^2+\sigma_{2,\alpha}^2\right)}}}{e^{-\frac{x_n^2}{2\sigma_{1,\beta}^2}}e^{-\frac{x_m^2}{2\left(\sigma_{1,\beta}^2+\sigma_{2,\beta}^2\right)}}}\times\prod_{i\in \mathcal{I}}\frac{\sigma_{1,\beta}}{\sigma_{1,\alpha}}e^{\frac{x_i^2}{2\sigma_{1,\beta}^2}-\frac{x_i^2}{2\sigma_{1,\alpha}^2}}\\
		&\times\prod_{j\in \mathcal{J}}\sqrt{\frac{\sigma_{1,\beta}^2+\sigma_{2,\beta}^2}{\sigma_{1,\alpha}^2+\sigma_{2,\alpha}^2}}e^{\frac{x_j^2}{2\left(\sigma_{1,\beta}^2+\sigma_{2,\beta}^2\right)}-\frac{x_j^2}{2\left(\sigma_{1,\alpha}^2+\sigma_{2,\alpha}^2\right)}}
	\end{split}
\end{align}
where $\left(\sigma_{1,\alpha}^2,\sigma_{2,\alpha}^2\right)$ and $\left(\sigma_{2,\alpha}^2,\sigma_{2,\beta}^2\right)$ are the Gaussian parameters estimated with respect to $\underline{\alpha}$ and $\underline{\beta}$, respectively; $\mathcal{I}$ and $\mathcal{J}$ are the index sets that
\begin{equation}
	\begin{cases}
		\alpha_i=\beta_i=0&\forall i\in\mathcal{I};\\
		\alpha_i=\beta_j=1&\forall j\in\mathcal{J}.
	\end{cases}
\end{equation}
\begin{figure}[!h]
	\removelatexerror
	\begin{algorithm}[H]
		Start with an initial threshold $T_0$\;
		\For(\qquad\emph{Main loop}){$i = 0$ to $N-1$}
		{
			Generate $\hat{\underline{\phi}}$\;
			Estimate $\rho,\sigma_1^2,\sigma_2^2$ with $\hat{\underline{\phi}}$ and $\mathbf{x}$\;
			$P\gets f\left(\mathbf{x}|\hat{\underline{\phi}}\right)p\left(\hat{\underline{\phi}}\right)$\;
			Select $l$,	generate $\underline{\zeta}$\;
			Estimate $\rho_l,\sigma_{1,l}^2,\sigma_{2,l}^2$ with $\underline{\zeta}$ and $\mathbf{x}$\;
			$P_l\gets f\left(\mathbf{x}|\underline{\zeta}\right)p\left(\underline{\zeta}\right)$\;
			Select $u$, generate $\underline{\eta}$\;
			Estimate $\rho_u,\sigma_{1,u}^2,\sigma_{2,u}^2$ with $\underline{\eta}$ and $\mathbf{x}$\;
			$P_u \gets f\left(\mathbf{x}|\underline{\eta}\right)p\left(\underline{\eta}\right)$\;
			\Switch(\emph{Threshold shifting}){$\max\{P,P_l,P_u\}$}{
				$$\hspace{-7cm}\dots$$
			}
		}
	\end{algorithm}
	\caption{The ITS algorithm with unknown Gaussian parameters}
	\label{fig:its_with_gaussian_estimation}
\end{figure}
This removes the superiority of thresholding-based impulse detection provided by Eq.~\ref{equ:replacements_comparision} and the ITS algorithm is therefore no more guaranteed to equal the SMLR. As a consequence, the converging speed will decrease while the error rate may increase. To overcome this problem, we need a \textit{robust Gaussian estimation} (RGE) technique that is only slightly impacted by a single pair of impulse assignments $\left(\hat{\phi}_m,\hat{\phi}_n\right)$, so that in every loop we can approximately consider the estimations of Gaussian parameters remain independent from the update candidate. In this work, we applied the well-known and widely used median absolute deviation (MAD):
\begin{equation}
	\textrm{MAD}(Z)=\underset{n=0,1,\dots,N-1}{\textrm{median}}\left(|z_i-\textrm{median}(\mathbf{z})|\right).
\end{equation}
According to \textit{Rosseeuw} and \textit{Croux} \cite{rousseeuw1993alternatives}, for Gaussian processes there is
\begin{equation}
	\sigma_Z\approx1.4826\times\textrm{MAD}(Z).
\end{equation}

\subsection{Sparsity-Sensitive Initialization}
An appropriate blind selection of the initial threshold $T_0$ is critical to the performance of the ITS algorithm, as when the starting point approaches towards the global optimum:
\begin{itemize}
	\item the risk of converging at a local maximum decreases;
	\item the time cost of computation is reduced.
\end{itemize}
It has been reported in \cite{han2013online} that the ''three-sigma rule''~\cite{grafarend2006linear} can be applied for effective blind detection of impulsive outliers when combined with MAD:
\begin{equation}
	T_0 = 3\tilde{\sigma}_X=4.4478\times\textrm{MAD}(X).
\end{equation}
This method has been derived as robust to sparse impulses. However, for general BG noises, when $\rho$ increases to a relatively high level (e.g. over 1 percent), the amplitudes of impulse samples start to exhibit a significant impact on its performance. Therefore, a correction with respect to the sparsity level of impulses is called for. Generally, $T_0$ should be raised to a higher level, when the impulses become sparser, i.e. when $\rho$ decreases and/or $\sigma_2^2/\sigma_1^2$ increases.

In \cite{hurley2009comparing}, \textit{Hurley} and \textit{Rickard} have comparatively evaluated sixteen different common metrics of sparsity, and their result strongly recommends to deploy the Gini Index, which is normalized to $[0,1]$, sensitive to both the density and the relative level of outliers, and invariant to scaling or cloning. Given an sequence $\mathbf{c}=\{c_0,c_1,\dots,c_{N-1}\}$, we can define its \textit{sorting in the rising order} as $\overrightarrow{\mathbf{c}}=[c_{(0)},c_{(1)},\dots,c_{(N-1)}]$ that
\begin{equation}
	c_{(m)}\le c_{(n)}\quad\forall 0\le m<n\le N-1,
\end{equation}
and the Gini index of $\mathbf{c}$ can be calculated then as
\begin{equation}
	S(\mathbf{c})=1-2\sum\limits_{k=0}^N-1\frac{c_{(k)}}{||\mathbf{c}||_1}\left(\frac{N-k+\frac{1}{2}}{N}\right),
\end{equation}
where $||\bullet||_1$ is the Manhattan  norm. 

As we are interested in the magnitude of noise samples $\mathbf{x}$ rather than the raw amplitude, here we invoke the Gini Index of $|\mathbf{x}|$ instead of $\mathbf{x}$ to realize a \textit{sparsity-sensitive initialization} (SSI):
\begin{equation}
	T_0=10{S(|\mathbf{x}|)}\tilde{\sigma}_X=14.826{S\left(|\mathbf{x}|\right)}\times\textrm{MAD}(X).
\end{equation}

\section{Numerical Simulations}\label{sec:simulation}
To verify and evaluate our proposed approach, we carried out MATLAB simulations. As test input, BG noise sequences with length of $N=1\times10^5$ samples were generated with different parameters: $\sigma_1^2=1$, $\sigma_2^2\in\{10^2,10^3,10^4,10^5,10^6\}$, $\rho\in\{1\times10^{-4},3\times10^{-4},1\times10^{-3},3\times10^{-3},1\times10^{-2}\}$. For each unique specification, 100 times of Monte-Carlo test were executed to obtain the average impulse detection error rates of Type I and Type II. To evaluate the converging performance, we also recorded the average number of loops to converge. For reference, we also tested the performances of ITS when using mean absolute deviation for the Gaussian estimation instead of the RGE, and when using simple "three-sigma rule" for the initialization instead of our SSI. The results are listed in Tabs.~\ref{tab:type1err}-\ref{tab:loops}.

\begin{table}[!h]
	\centering
	\caption{Type I Error Rate of the ITS Algorithm}
	\label{tab:type1err}
	\begin{tabular}{c*{5}{|c}}
		\toprule[2px]
		\diagbox{$\sigma_2^2$}{$\rho$} & $1\times10^{-4}$ & $3\times10^{-4}$ & $1\times10^{-3}$ & $3\times10^{-3}$ & $1\times10^{-2}$\\\hline
		
						& \textbf{2.79E-5} & \textbf{2.38E-5} & \textbf{2.98E-5} & \textbf{2.33E-5}& \textbf{1.25E-5}\\
		$10^{2}$ & \textit{9.5E-6} & \textit{1.43E-5} & \textit{3.56E-5} & \textit{1.001E-4} & \textit{3.219E-4}\\
						& 0.2135 & 0.2124 & 0.2086 & 0.1982 & 0.1670\\\hline
						
						& \textbf{9.0E-6} & \textbf{1.20E-5} & \textbf{1.17E-5} & \textbf{8.8E-6}& \textbf{4.2E-6}\\
		$10^{3}$ & \textit{3.7E-6} & \textit{5.2E-6} & \textit{1.18E-5} & \textit{3.22E-5} & \textit{7.62E-5}\\
						& 0.2121 & 0.2083 & 0.1956 & 0.1652 & 0.0984\\\hline
						
						& \textbf{8.0E-6} & \textbf{3.0E-6} & \textbf{3.2E-6} & \textbf{4.0E-6}& \textbf{3.2E-6}\\
		$10^{4}$ & \textit{1.0E-6} & \textit{0} & \textit{4.4E-6} & \textit{6.8E-6} & \textit{1.48E-5}\\
						& 0.2076 & 0.1963 & 0.1613& 0.0999& 0.0265\\\hline
						
						& \textbf{3.2E-6} & \textbf{1.0E-6} & \textbf{1.7E-6} & \textbf{1.7E-6}& \textbf{2.0E-6}\\
		$10^{5}$ & \textit{1.0E-6} & \textit{0} & \textit{1.0E-6} & \textit{2.5E-6} & \textit{4.4E-6}\\
						& 0.1955 & 0.1647 & 0.0976 & 0.0299 & 0.0036\\\hline
						
						& \textbf{1.0E-6} & \textbf{0} & \textbf{0} & \textbf{0}& \textbf{2.3E-6}\\
		$10^{6}$ & \textit{0} & \textit{0} & \textit{0} & \textit{0} & \textit{1.7E-6}\\
						& 0.1646 & 0.1005 & 0.0354 & 0.0098 & 0.0025\\\hline
						
		\multicolumn{6}{l}{Legend: \textbf{RGE and SSI}; \textit{only SSI}; only RGE}\\
		\bottomrule[2px]
	\end{tabular}
\end{table}

\begin{table}[!h]
	\centering
	\caption{Type II Error Rate of the ITS Algorithm}
	\label{tab:type2err}
	\begin{tabular}{c*{5}{|c}}
		\toprule[2px]
		\diagbox{$\sigma_2^2$}{$\rho$} & $1\times10^{-4}$ & $3\times10^{-4}$ & $1\times10^{-3}$ & $3\times10^{-3}$ & $1\times10^{-2}$\\\hline
		
						& \textbf{0.3863} & \textbf{0.3680} & \textbf{0.3292} & \textbf{0.3367}& \textbf{0.3453}\\
		$10^{2}$ & \textit{0.3720} & \textit{0.3462} & \textit{0.3227} & \textit{0.3078} & \textit{0.2803}\\
						& 0.1214 & 0.1205 & 0.1046 & 0.1021 & 0.1089\\\hline
						
						& \textbf{0.1830} & \textbf{0.1292} & \textbf{0.1207} & \textbf{0.1154}& \textbf{0.1264}\\
		$10^{3}$ & \textit{0.1678} & \textit{0.1366} & \textit{0.1121} & \textit{0.1060} & \textit{0.0995}\\
						& 0.0556 & 0.0496 & 0.0347 & 0.0351 & 0.0417\\\hline
						
						& \textbf{0.0773} & \textbf{0.0498} & \textbf{0.0426} & \textbf{0.0425}& \textbf{0.0438}\\
		$10^{4}$ & \textit{0.0987} & \textit{0.0586} & \textit{0.0393} & \textit{0.0383} & \textit{0.0354}\\
						& 0.0303 & 0.0204 & 0.0158 & 0.0162 & 0.0192\\\hline
						
						& \textbf{0.0413} & \textbf{0.0228} & \textbf{0.0197} & \textbf{0.0164}& \textbf{0.0146}\\
		$10^{5}$ & \textit{0.0529} & \textit{0.0257} & \textit{0.0172} & \textit{0.0148} & \textit{0.0125}\\
						& 0.0182 & 0.0140 & 0.0075 & 0.0082 & 0.0096\\\hline
						
						& \textbf{0.0143} & \textbf{0.0095} & \textbf{0.0091} & \textbf{0.0066}& \textbf{0.0053}\\
		$10^{6}$ & \textit{0.0236} & \textit{0.0150} & \textit{0.0072} & \textit{0.0055} & \textit{0.0046}\\
						& 0.0178 & 0.0033 & 0.0055 & 0.0049 & 0.0043\\\hline
						
		\multicolumn{6}{l}{Legend: \textbf{RGE and SSI}; \textit{only SSI}; only RGE}\\
		\bottomrule[2px]
	\end{tabular}
\end{table}

\begin{table}[!h]
	\centering
	\caption{Average Converging Loops of the ITS Algorithm}
	\label{tab:loops}
	\begin{tabular}{c*{5}{|c}}
		\toprule[2px]
		\diagbox{$\sigma_2^2$}{$\rho$} & $1\times10^{-4}$ & $3\times10^{-4}$ & $1\times10^{-3}$ & $3\times10^{-3}$ & $1\times10^{-2}$\\\hline
		
						& \textbf{3.41} & \textbf{3.90} & \textbf{3.13} & \textbf{2.67}& \textbf{8.66}\\
		$10^{2}$ & \textit{1.59} & \textit{2.94} & \textit{10.07} & \textit{37.72} & \textit{193.47}\\
						& 2.7 & 3 & 3.4 & 4.4 & 5.9\\\hline
		
						& \textbf{4.12} & \textbf{3.60} & \textbf{3.02} & \textbf{2.01}& \textbf{15.28}\\
		$10^{3}$ & \textit{1.09} & \textit{1.41} & \textit{4.03} & \textit{16.64} & \textit{118.71}\\
						& 2.7 & 2.8 & 3.6 & 7.2 & 20.5\\\hline
		
						& \textbf{3.69} & \textbf{2.85} & \textbf{3.02} & \textbf{1.85}& \textbf{15.13}\\
		$10^{4}$ & \textit{1.06} & \textit{1.03} & \textit{2.15} & \textit{11.06} & \textit{98.02}\\
						& 2.6 & 3.2 & 6& 37.5& 978.5\\\hline
		
						& \textbf{2.58} & \textbf{1.52} & \textbf{1.12} & \textbf{2.20}& \textbf{8.59}\\
		$10^{5}$ & \textit{1.02} & \textit{1.05} & \textit{1.86} & \textit{9.45} & \textit{95.35}\\
						& 2.8 & 3.4 & 33& 1133.3 & 772.9\\\hline
		
						& \textbf{1.44} & \textbf{1.04} & \textbf{1.10} & \textbf{1.79}& \textbf{4.18}\\
		$10^{6}$ & \textit{1.00} & \textit{1.07} & \textit{1.92} & \textit{9.46} & \textit{94.52}\\
						& 4.1 & 2.5 & 169.4 & 462.4 & 546.6\\\hline
		
		\multicolumn{6}{l}{Legend: \textbf{RGE and SSI}; \textit{only SSI}; only RGE}\\
		\bottomrule[2px]
	\end{tabular}
\end{table}

 As the numerical results show, our approach of ITS with SSI and RGE appears satisfying in most test cases, providing low error rates and a good converging performance. In the cases with low impulse power $\sigma_2^2$, the Type II error rate is relatively high because some samples are determined as impulses by the Bernoulli process, but assigned with only low amplitude by the Gaussian process, and are hard to detect. An instance of this phenomenon is illustrated in Fig.~\ref{fig:misdetections}. Besides, it also worths to note that both the SSI method and the RGE method are clearly efficient in suppressing the detection error rate and boosting the convergence.
\begin{figure}
	\centering
	\includegraphics[width=.49\textwidth]{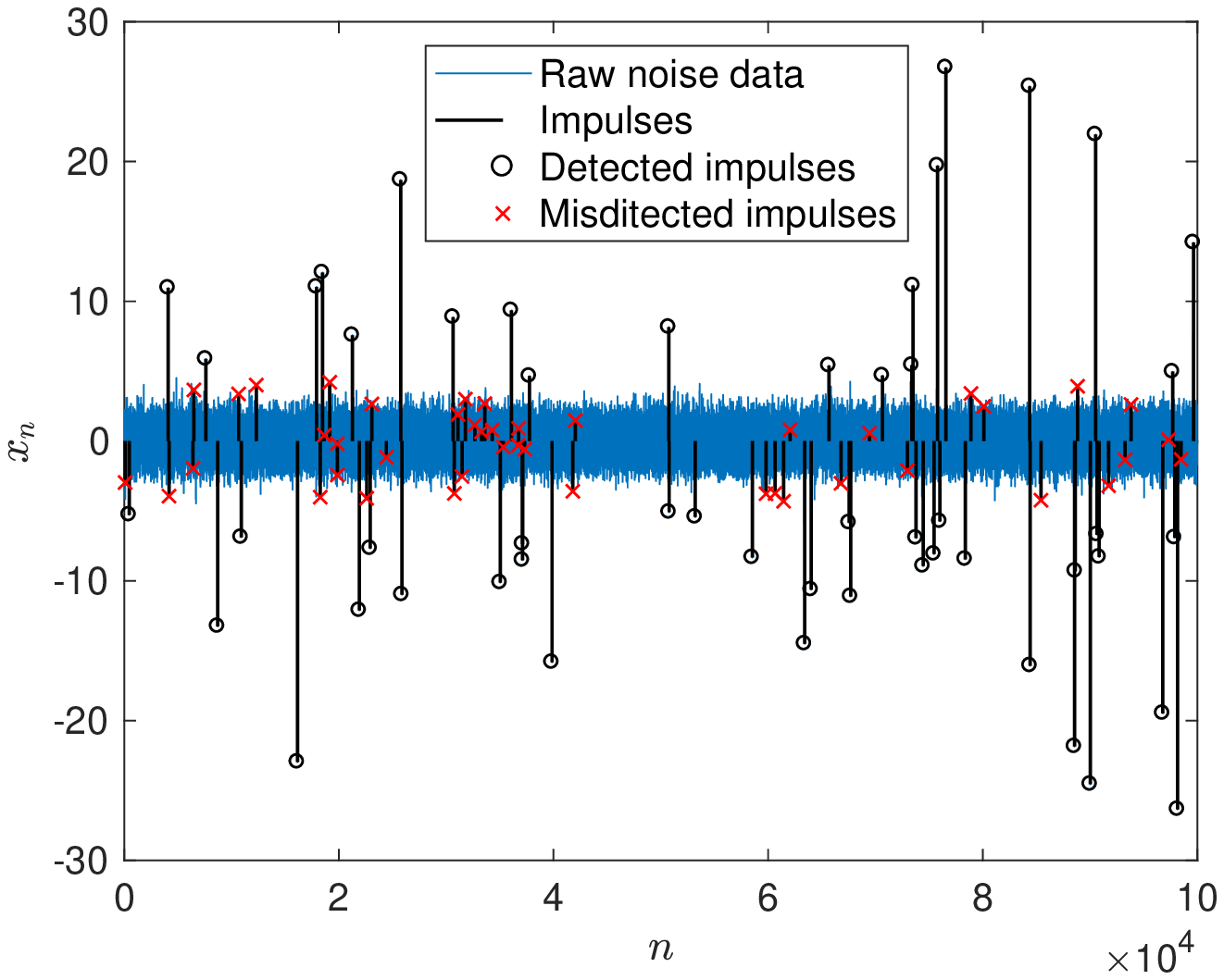}
	\caption{A sample detection result of the ITS algorithm with RGE and SSI. $\sigma_1^2=1$, $\sigma_2^2=10^2$, $\rho=1\times10^{-3}$. All misdetected impulses are low in amplitude, not outlying from the background noise.}
	\label{fig:misdetections}
\end{figure}

\section{Conclusion}\label{sec:conclusion}
So far, in this paper we have presented a novel approach of blind impulse detection for Bernoulli-Gaussian noise in underspread channels, which requires no a priori knowledge about the BG model parameters. The proposed ITS algorithm has been mathematically derived as equivalent to the classical SMLR algorithm under the condition of underspread channel, with a significantly reduced computational complexity. The MAD-based robust Gaussian estimation and a new sparsity-sensitive threshold initialization technique have been applied, in order to enhance the detection accuracy and boost the convergence. The efficiency of our approach has been verified through numerical simulations.

As future work, it is of interest to generalize the proposed method for frequency-selective channels, in order to assist estimation of channel impulse response under presence of impulsive noises, which can benefit various applications such as physical layer security\cite{ambekar2012improving}, adaptive spreading codes\cite{ambekar2010channel} and radio channel integrity monitoring\cite{weinand2017application}.

\section*{Acknowledgment}
This work was partly supported  by the German Ministry of Education and Research (BMBF) under grant 16KIS0263K (HiFlecs). The authors would like to acknowledge the contributions of their colleagues. This information reflects the consortium's view, but the consortium is not liable for any use that may be made of any of the information contained therein.


\begin{thebibliography}{10}
	\providecommand{\url}[1]{#1}
	\csname url@samestyle\endcsname
	\providecommand{\newblock}{\relax}
	\providecommand{\bibinfo}[2]{#2}
	\providecommand{\BIBentrySTDinterwordspacing}{\spaceskip=0pt\relax}
	\providecommand{\BIBentryALTinterwordstretchfactor}{4}
	\providecommand{\BIBentryALTinterwordspacing}{\spaceskip=\fontdimen2\font plus
		\BIBentryALTinterwordstretchfactor\fontdimen3\font minus
		\fontdimen4\font\relax}
	\providecommand{\BIBforeignlanguage}[2]{{%
			\expandafter\ifx\csname l@#1\endcsname\relax
			\typeout{** WARNING: IEEEtran.bst: No hyphenation pattern has been}%
			\typeout{** loaded for the language `#1'. Using the pattern for}%
			\typeout{** the default language instead.}%
			\else
			\language=\csname l@#1\endcsname
			\fi
			#2}}
	\providecommand{\BIBdecl}{\relax}
	\BIBdecl
	
	\bibitem{sharma2016sparsity}
	S.~Sharma, V.~Bhatia, and A.~Gupta, ``{Sparsity Based UWB Receiver Design in
		Additive Impulse Noise Channels},'' in \emph{IEEE 17th International Workshop
		on Signal Processing Advances in Wireless Communications (SPAWC)}.\hskip 1em
	plus 0.5em minus 0.4em\relax IEEE, 2016, pp. 1--5.
	
	\bibitem{sanchez2007interference}
	M.~G. Sanchez, I.~Cuinas, and A.~V. Alejos, ``{Interference and Impairments in
		Radio Communication Systems Due to Industrial Shot Noise},'' in \emph{IEEE
		International Symposium on Industrial Electronics (ISIE)}.\hskip 1em plus
	0.5em minus 0.4em\relax IEEE, 2007, pp. 1849--1854.
	
	\bibitem{henkel1994wideband}
	W.~Henkel and T.~Ke{\ss}ler, ``{A Wideband Impulsive Noise Survey in the German
		Telephone Network: Statistical Description and Modeling},'' \emph{Archiv
		f\"ur Elektronik und \"Ubertragungstechnik}, vol.~48, pp. 277--277, 1994.
	
	\bibitem{han2017noise}
	B.~Han, V.~Stoica, C.~Kaiser, N.~Otterbach, and K.~Dostert, ``{Noise
		Characterization and Emulation for Low-Voltage Power Line Channels Across
		Narrowband and Broadband},'' \emph{Digital Signal Processing}, vol.~69, pp.
	259--274, 2017.
	
	\bibitem{osseiran2014scenarios}
	A.~Osseiran, F.~Boccardi, V.~Braun, K.~Kusume, P.~Marsch, M.~Maternia,
	O.~Queseth, M.~Schellmann, H.~Schotten, H.~Taoka \emph{et~al.}, ``{Scenarios
		for 5G Mobile and Wireless Communications: the Vision of the METIS
		Project},'' \emph{IEEE Communications Magazine}, vol.~52, no.~5, pp. 26--35,
	2014.
	
	\bibitem{bockelmann2017hiflecs}
	C.~Bockelmann, A.~Dekorsy, A.~Gnad, L.~Rauchhaupt, A.~Neumann, D.~Block,
	U.~Meier, J.~Rust, S.~Paul, F.~Mackenthun \emph{et~al.}, ``{Hiflecs:
		Innovative Technologies for Low-Latency Wireless Closed-Loop Industrial
		Automation Systems},'' \emph{22. VDE-ITG-Fachtagung Mobilkommunikation},
	2017.
	
	\bibitem{middleton1979procedures}
	D.~Middleton, ``{Procedures for Determining the Parameters of the First-Order
		Canonical Models of Class A and Class B Electromagnetic Interference},''
	\emph{IEEE Transactions on Electromagnetic Compatibility}, no.~3, pp.
	190--208, 1979.
	
	\bibitem{nikias1995signal}
	C.~L. Nikias and M.~Shao, \emph{{Signal Processing with Alpha-Stable
			Distributions and Applications}}.\hskip 1em plus 0.5em minus 0.4em\relax
	Wiley-Interscience, 1995.
	
	\bibitem{shongwe2015study}
	T.~Shongwe, A.~J.~H. Vinck, and H.~C. Ferreira, ``{A Study on Impulse Noise and
		Its Models},'' \emph{SAIEE Africa Research Journal}, vol. 106, no.~3, pp.
	119--131, 2015.
	
	\bibitem{kormylo1982maximum}
	J.~Kormylo and J.~Mendel, ``{Maximum Likelihood Detection and Estimation of
		Bernoulli-Gaussian Processes},'' \emph{IEEE Transactions on Information
		Theory}, vol.~28, no.~3, pp. 482--488, 1982.
	
	\bibitem{lavielle1993bayesian}
	M.~Lavielle, ``{Bayesian Deconvolution of Bernoulli-Gaussian Processes},''
	\emph{Signal Processing}, vol.~33, no.~1, pp. 67--79, 1993.
	
	\bibitem{champagnat1996unsupervised}
	F.~Champagnat, Y.~Goussard, and J.~Idier, ``{Unsupervised Deconvolution of
		Sparse Spike Trains Using Stochastic Approximation},'' \emph{IEEE
		Transactions on Signal Processing}, vol.~44, no.~12, pp. 2988--2998, 1996.
	
	\bibitem{soussen2011bernoulli}
	Soussen, Charles, et al. ``{From Bernoulli-Gaussian Deconvolution to Sparse Signal Restoration.'}' \emph{IEEE Transactions on Signal Processing}, vol.~59, no.~10, pp. 4572--4584, 2011.
	
	\bibitem{mendel2013optimal}
	J.~M. Mendel, \emph{{Optimal Seismic Deconvolution: an Estimation-Based
			Approach}}.\hskip 1em plus 0.5em minus 0.4em\relax Elsevier, 2013.
	
	\bibitem{raghavan2007capacity}
	V.~Raghavan, G.~Hariharan, and A.~M. Sayeed, ``{Capacity of Sparse Multipath
		Channels in the Ultra-Wideband Regime},'' \emph{IEEE Journal of Selected
		Topics in Signal Processing}, vol.~1, no.~3, pp. 357--371, 2007.
	
	\bibitem{canete2008time}
	F.~Canete, L.~Diez, J.~A. Cortes, J.~J. Sanchez-Martinez, and L.~M. Torres,
	``{Time-Varying Channel Emulator for Indoor Power Line Communications},'' in
	\emph{2008 IEEE Global Telecommunications Conference (GLOBECOM)}.\hskip 1em
	plus 0.5em minus 0.4em\relax IEEE, 2008, pp. 1--5.
	
	\bibitem{zimmermann2000analysis}
	M.~Zimmermann and K.~Dostert, ``{An Analysis of the Broadband Noise Scenario in
		Powerline Networks},'' in \emph{International Symposium on Powerline
		Communications and its Applications (ISPLC)}.\hskip 1em plus 0.5em minus
	0.4em\relax IEEE, 2000, pp. 5--7.
	
	\bibitem{han2013online}
	B.~Han, M.~Muma, M.~Feng, and A.~M. Zoubir, ``{An Online Approach for
		Intracranial Pressure Forecasting Based on Signal Decomposition and Robust
		Statistics},'' in \emph{2013 IEEE International Conference on Acoustics,
		Speech and Signal Processing (ICASSP)}.\hskip 1em plus 0.5em minus
	0.4em\relax IEEE, 2013, pp. 6239--6243.
	
	\bibitem{rousseeuw1993alternatives}
	P.~J. Rousseeuw and C.~Croux, ``{Alternatives to the Median Absolute
		Deviation},'' \emph{Journal of the American Statistical association},
	vol.~88, no. 424, pp. 1273--1283, 1993.
	
	\bibitem{grafarend2006linear}
	E.~W. Grafarend, \emph{{Linear and Nonlinear Models: Fixed Effects, Random
			Effects, and Mixed Models}}.\hskip 1em plus 0.5em minus 0.4em\relax de
	Gruyter, 2006.
	
	\bibitem{hurley2009comparing}
	N.~Hurley and S.~Rickard, ``{Comparing Measures of Sparsity},'' \emph{IEEE
		Transactions on Information Theory}, vol.~55, no.~10, pp. 4723--4741, 2009.
	
	\bibitem{ambekar2012improving}
	A.~Ambekar, M.~Hassan, and H.~D. Schotten, ``{Improving Channel Reciprocity for
		Effective Key Management Systems},'' in \emph{2012 International Symposium on
		Signals, Systems, and Electronics (ISSSE)}.\hskip 1em plus 0.5em minus
	0.4em\relax IEEE, 2012, pp. 1--4.
	
	\bibitem{ambekar2010channel}
	A.~K. Ambekar and H.~D. Schotten, ``{Channel-Dependent Adaptation Scheme for
		Spreading Codes in DS-CDMA},'' in \emph{2010 European Wireless Conference
		(EW)}.\hskip 1em plus 0.5em minus 0.4em\relax IEEE, 2010, pp. 519--522.
	
	\bibitem{weinand2017application}
	A.~Weinand, M.~Karrenbauer, R.~Sattiraju, and H.~Schotten, ``{Application of
		Machine Learning for Channel based Message Authentication in Mission Critical
		Machine Type Communication},'' in \emph{Proceedings of European Wireless
		2017}.\hskip 1em plus 0.5em minus 0.4em\relax VDE, 2017, pp. 1--5.
	
\end{thebibliography}

\end{document}